\documentclass[11pt]{llncs}
\newif\ifFull
\Fullfalse

\usepackage{graphicx}
\usepackage{times}
\usepackage{url}

\setlength{\textwidth}{6.5in}
\setlength{\textheight}{9in}
\setlength{\evensidemargin}{0in}
\setlength{\oddsidemargin}{0in}
\setlength{\topmargin}{-.5in}
\def\R{{\bf R}}

% fix broken caption in llncs
\makeatletter
\def\hb@xt@{\hbox to }
\makeatother

% MARK ENDS OF PROOFS!
\let\oldendproof\endproof
\def\endproof{\qed\oldendproof}

\pagestyle{plain}

\begin{document}

\ifFull
\title{\LARGE Succinct Greedy Graph Drawing in the Hyperbolic Plane}
\else
\title{\LARGE Succinct Greedy Graph Drawing in the Hyperbolic Plane}
\fi

\author{David Eppstein$^1$ \and Michael T. Goodrich$^2$}

\institute{Computer Science Department, University of California, Irvine, USA.}

\maketitle   

\begin{abstract}
We describe an efficient method for 
drawing any $n$-vertex simple graph $G$ in the hyperbolic plane.
Our algorithm produces 
\emph{greedy} drawings, which support 
greedy \emph{geometric routing},
so that a message $M$ between any pair of vertices 
may be routed geometrically, simply by having each vertex that receives 
$M$ pass it along to any neighbor that is closer in the hyperbolic 
metric to the message's eventual destination.
More importantly, for networking applications,
our algorithm produces \emph{succinct} drawings,
in that each of the vertex positions
in one of our embeddings can be represented 
using $O(\log n)$ bits 
and the calculation of which neighbor to send a message to may be performed efficiently using these representations.
These properties are useful, for example, for routing in sensor networks,
where storage and bandwidth are limited.
\end{abstract}

\footnotetext[1]{\url{http://www.ics.uci.edu/~eppstein/}}
\footnotetext[2]{\url{http://www.ics.uci.edu/~goodrich/}}

\section{Introduction}
One of the richest modern applications of algorithmic graph theory is
in networking, and one of the most important algorithmic problems in networking
is \emph{routing}.
In this problem, 
we are given an $n$-vertex graph $G$ representing a communication network, 
where each vertex in $G$ is a computational agent, such as a sensor, 
smart phone, base station, PC, or workstation,
and the edges in $G$ represent communication channels.
The routing problem is to set up an efficient means to support message passing
between the vertices in $G$.

The traditional way to do routing 
is via protocols,
such as in the link-state/OSPF
or distance-vector/RIP protocols (e.g., see~\cite{c-iti-06,t-cn-03}),
that set up routing tables for each vertex $v$ in $G$.
Each such routing table has size $n$ (represented using
$\Theta(n\log n)$ bits) for each vertex $v$ in $G$, which allows $v$ to
determine to which of its neighbors it should send
a message destined for another node $w$ in $G$.
Such a solution allows for a simple message-forwarding policy, but it
is space inefficient and it requires considerable setup overhead.

There is a recent alternative approach to solving the network routing
problem, however, which can be viewed as 
new and exciting application of graph drawing.
In this alternative approach, called 
\emph{geometric routing}~\cite{bmsu-rgdah-01,kk-gpsr-00,%
fs-odgfc-06,kwz-aogma-02,kwzz-gacr-03,kwz-wcoac-03}
or \emph{geographic routing}~\cite{k-gruhs-07},
the graph $G$ is drawn in a geometric metric space $\cal S$ in the
standard way, so that vertices are drawn as points in $\cal S$ and
each edge is drawn as the loci of points along the shortest path between
its two endpoints.
For example, if $\cal S$ is the Euclidean plane, $\R^2$, then
edges would be drawn as straight line segments in this approach.
Routing is then performed by having any vertex $v$
holding a message destined for a node $w$ use a simple policy
involving only the coordinates of $v$ and $w$ and the coordinates and
topology of $v$'s neighbors to determine the neighbor of $v$ to which
$v$ should forward the message.
It is important to note that even in applications where the vertices
of $G$ come with pre-defined geometric coordinates (e.g., GPS
coordinates of smart sensors), the drawing of $G$ need not take these
coordinates into consideration, and, in fact, many known 
geometric routing schemes ignore pre-existing coordinates 
and create a new embedding using only the graph structure.
Thus, this approach to solving the routing
problem is a direct application of graph drawing.

There is an important difference between the geometric routing
problem and traditional graph drawing, however: unlike traditional
graph drawing, the main criterion for
judging an embedding done for geometric routing purposes 
is not its aesthetic qualities.
Instead, in this application,
we judge embeddings by how easily they support
simple and efficient routing protocols.

\begin{figure}[t]
\centering\includegraphics[width=1.5in]{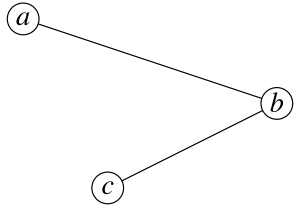}
\caption{An embedding of a graph with three vertices and two edges into the Euclidean plane that is not greedy. To route a message from $c$ to $a$ along graph edges, it must pass through the vertex $b$, which is farther from the eventual destination.}
\label{fig:nongreedy}
\end{figure} 

Perhaps the single most
simple routing policy imaginable is the \emph{greedy} one:
\begin{itemize}
\item
If a vertex $v$ receives a message $M$ with destination $w$, $v$
should forward $M$ 
to any neighbor of $v$ in $G$ that is closer than $v$ to $w$.
\end{itemize}
Thus,
we are interested in this paper in \emph{greedy} drawings of 
arbitrary graphs, that is, drawings for which greedy
routing is always successful.

Unfortunately, greedy routing doesn't always work.
For example,
it is not uncommon for geometric graph embeddings to have ``lakes''
and ``voids'' that make greedy routing impossible in some
cases~\cite{pr-ocrgr-05}; see Figure~\ref{fig:nongreedy}. 
Indeed, in any fixed-dimensional Euclidean space, a star with 
sufficiently many leaves cannot be embedded so that all paths are greedy: 
some two leaves would form an angle greater than $\pi/3$ at the star center, 
and as in the figure a route from the leaf closer to the center 
to the other leaf would not be greedy.
Thus, in order to find greedy drawing schemes for
arbitrary connected graphs, we must consider drawings in non-Euclidean spaces.

Following the formalism of Papadimitriou and Ratajczak~\cite{pr-ocrgr-05},
which was developed for Euclidean spaces,
we say that a \emph{distance decreasing path}
from $v$ to $w$ in a
geometric embedding of $G$ is a path $(v_1,v_2,\ldots,v_k)$ such that
$v=v_1$, $w=v_k$, and
\[
d(v_i,w) > d(v_{i+1},w),
\]
for $i=1,2,\ldots,k$.
A \emph{greedy embedding}\footnote{% 
    Note that this formalism is equivalent to 
    the informal notion that defines a greedy embedding as one in
    which greedy routing always works.
    The formalism based on distance decreasing paths is a little
    easier to work with than this informal notion,
    however, so it is the one we use in this paper.}
of a graph $G$ in a geometric metric space
$\cal S$ is a drawing of $G$ in $\cal S$ such that a
distance decreasing path exists between every pair of vertices in $G$.

\subsection{Prior Related Work}
Early papers on geometric routing include work by 
Bose \textit{et al.}~\cite{bmsu-rgdah-01},
who extract a planar subgraph of $G$, embed it, and then route 
a message from $v$ to $w$ by marching around
the faces intersected by the line segment $vw$ using a 
subdivision traversal 
algorithm of Kranakis {\it et al.}~\cite{kranakiscompass}.
Karp and Kung~\cite{kk-gpsr-00} introduce a hybrid scheme, which
combines a greedy routing strategy with face routing.
Similar hybrid schemes were subsequently studied by several other
researchers~\cite{fs-odgfc-06,kwz-aogma-02,kwzz-gacr-03,kwz-wcoac-03}.
An alternative hybrid augmented greedy
scheme is introduced by Carlsson and Eager~\cite{ce-negrw-07}.

Rao {\it et al.}~\cite{rrpss-grli-03} introduce the idea of drawing
a graph using virtual coordinates and doing a pure greedy
routing strategy with that drawing, although they make no theoretical
guarantees.
Papadimitriou and Ratajczak~\cite{pr-ocrgr-05} 
continue this line of work on greedy drawings, 
studying greedy schemes that are guaranteed to work,
and they conjecture that Euclidean greedy drawings exist
for any graph containing a 3-connected planar spanning subgraph.
They present a greedy drawing algorithm for embedding 
3-connected planar graphs in $\R^3$ based on 
a specialization of Steinitz's Theorem for circle packings, albeit
with a non-standard metric.
Dhandapani~\cite{d-gdt-08} provides an existence proof that two-dimensional 
Euclidean 
greedy drawings of triangulations are always possible, but he does not 
provide a polynomial-time algorithm to find them.
Chen {\it et al.}~\cite{cgw-dcvc-07} study methods for producing
two-dimensional Euclidean greedy drawings for graphs containing power
diagrams, and Lillis and Pemmaraju~\cite{lp-oelia-08} provide similar methods
for graphs containing Delaunay triangulations. 
It is not clear whether either of these methods runs in polynomial time,
however. 
Thus, as of this writing, the problem of finding a polynomial-time algorithm
for producing two-dimensional greedy drawings of 3-connected planar graphs
remains open.

The corresponding two-dimensional 
problem for non-Euclidean geometries has a solution,
however, in that Kleinberg~\cite{k-gruhs-07}
provides a polynomial-time algorithm for embedding any graph in the
hyperbolic plane so as to allow for greedy routing using the standard metric
for hyperbolic space.

\subsection{The Importance of Succinctness}
Unfortunately, all of the algorithms mentioned above for producing greedy
embeddings, including the hyperbolic-space solution of
Kleinberg~\cite{k-gruhs-07}, contain a hidden drawback that makes them
ill-suited for the motivating application of geometric routing.
Namely, each of the greedy embeddings mentioned above use vertex
coordinates with representations 
requiring $\Omega(n\log n)$ bits in the worst case.
Thus, these greedy approaches to geometric routing 
have the same space usage 
as traditional routing table approaches. 
Worse, the above greedy embedding schemes have 
inferior bandwidth requirements, since they use message headers of length
$\Omega(n\log n)$ bits in the worst case, whereas traditional routing
table approaches use message headers of size $\Theta(\log n)$ bits. 
Since the \textit{raison d'{\^e}tre} for greedy embeddings is to improve and
simplify traditional routing schemes,
if embeddings are to be useful for geometric routing purposes, 
they should be \emph{succinct},
that is, they should use vertices with representations
having a number of bits that is polylogarithmic in $n$.

We are, in fact, not the first to make this observation.
Muhammad~\cite{m-adgra-07} specifically addresses succinctness,
observing that a method based on extracting a planar
subgraph of the routing network $G$ and performing a hybrid 
greedy/face-routing algorithm in this embedding 
can be implemented using only $O(\log n)$ bits for each vertex
coordinate, since planar graphs can be drawn in $O(n)\times O(n)$
grids~\cite{fpp-hdpgg-90,s-epgg-90}.

For non-Euclidean spaces, Maymounkov~\cite{m-getel-06}
provides a greedy drawing method
for three-dim\-en\-sion\-al 
hyperbolic space using vertices that can be represented with $O(\log^2 n)$ 
bits.
His work leaves open the existence of succinct greedy embeddings for
two-dimensional non-Euclidean spaces, however, as well as whether there are
succinct non-Euclidean greedy
embeddings that use only $O(\log n)$ bits per vertex.

\subsection{Our Results}
In this paper, we settle both questions of whether there are
succinct greedy embeddings in two-dimensional non-Euclidean spaces
and whether the vertices in such embeddings can be represented using an
asymptotically optimal number of bits.
In particular, we show that any $n$-vertex connected graph
can be drawn in the hyperbolic plane with coordinates that can be
represented using $O(\log n)$ bits so as to support greedy 
geometric routing between any
pair of vertices, using a standard distance metric for 
hyperbolic space.
Our scheme is constructive, runs in polynomial time, and allows the distance between any two vertices to be calculated efficiently from our representation of their coordinates.
In addition, our greedy drawing scheme is based on
the combination of a number of graph drawing and data structuring 
techniques.

\section{Autocratic Weight-Balanced Trees}
One of the new data structuring techniques we use in our
greedy drawing scheme is
a data structure that we call
\emph{autocratic weight-balanced binary trees}.
These are first and foremost weight-balanced binary trees,
which store weighted items at their leaves so that the
depth of each item of weight $w_i$ is $O(\log W/w_i)$, where $W$ is the sum
of all weights.
Just as important, however, is that they are autocratic, by which we mean
that the distance from any leaf $v$ to any other leaf $w$ is strictly greater
than the distance from the root to $w$, where tree distance is measured 
by simple path length.
Of course, this autocratic property implies that such binary trees are not
proper, in that we allow for some internal nodes 
in such trees to have only one child.
The challenge, of course, is to have a structure that is both autocratic and
weight-balanced.

\begin{figure}[t]
\centering\includegraphics[width=4in]{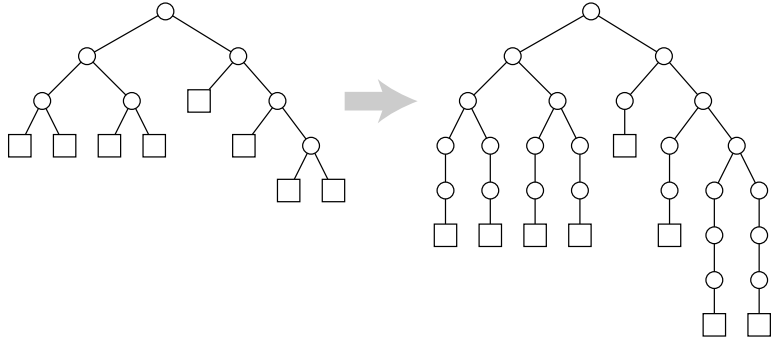}
\caption{Converting a weight-balanced binary tree into an autocratic weight-balanced binary tree.}
\label{fig:autocracy}
\end{figure}

It turns out that there is a fairly simple method for turning 
any weight-balanced binary tree into an autocratic weight-balanced
tree.
So suppose we are given
an ordered collection of $k$ items with weights $\{w_1,w_2,\ldots, w_k\}$,
such that each $w_i\ge 1$.
If we store these items at the leaves of a binary tree $T$, we say
that $T$ is \emph{weight-balanced} if the depth of each item $i$ is
$O(\log W/w_i)$,
where $W=\sum_i w_i$.
There are several existing schemes for producing 
a weight-balanced
binary tree 
so that an inorder listing of the items stored at its leaves
preserves the given order (e.g., see~\cite{GilMoo-BSTJ-59,Knu-AI-71}).

Suppose, then, that $T$ is such an ordered weight-balanced tree, and let $r$
denote the root of $T$.
To convert $T$ into an autocratic weight-balanced tree, $T'$, we replace
the edge connecting each leaf $v$ to its parent 
with a path of length
\[
1+d_T(r,{\rm parent}(v)) ,
\]
where $d_T(v,w)$ denotes the length of the path from $v$ to $w$ 
in the tree $T$.
That is, we insert a number of ``dummy'' nodes between each leaf and
its parent that is equal to the depth of its parent.
(See Figure~\ref{fig:autocracy}.)

This transformation increases the depth of each leaf in $T$ by less than a
factor of two and it keeps the depth of all other nodes in $T$
unchanged. 
Thus, if the depth of a leaf storing item $i$ in $T$
was previously at most $c\log W/w_i$, for some constant $c$, 
then the depth of the corresponding leaf in $T'$ is less than 
$2c\log W/w_i$, which is still $O(\log W/w_i)$.
Given that $T$ was weight-balanced, this implies that
$T'$ is a weight-balanced tree.
More importantly, we have the following lemma.

\begin{lemma}
The above transformation of a weight-balanced
tree $T$ produces an autocratic weight-balanced tree $T'$.
\end{lemma}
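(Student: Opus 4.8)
The weight-balanced half of the claim was already argued above, so the plan is to establish only the autocratic property: that for every pair of distinct leaves $v,w$ of $T'$ we have $d_{T'}(v,w) > d_{T'}(r,w)$, where $d_{T'}$ denotes path length in $T'$ and $r$ is the root. First I would fix such a pair and route both the $v$-to-$w$ path and the $r$-to-$w$ path through the node $u={\rm lca}(v,w)$, their lowest common ancestor.

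Before using $u$, I would observe that the inserted dummy vertices lie only along the edges joining leaves to their parents, so each dummy node is an ancestor of exactly one leaf. Consequently $u$ cannot be a dummy node (it is an ancestor of the two distinct leaves $v$ and $w$); it is a genuine internal node of $T$, whose depth the transformation leaves unchanged, so $d_{T'}(r,u)=d_T(r,u)$. Since $u$ is an ancestor of $w$, it lies on the $r$-to-$w$ path, giving $d_{T'}(r,w)=d_{T'}(r,u)+d_{T'}(u,w)$, and since $u$ is the lowest common ancestor of $v$ and $w$ we also have $d_{T'}(v,w)=d_{T'}(v,u)+d_{T'}(u,w)$. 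Cancelling the shared term $d_{T'}(u,w)$ reduces the entire claim to the single inequality
\[
d_{T'}(v,u) > d_{T'}(r,u).
\]

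To finish, I would evaluate the left-hand side directly from the construction. Writing $p_v={\rm parent}(v)$, the replaced leaf edge contributes $1+d_T(r,p_v)$ to the path from $v$ up to $p_v$, and from $p_v$ up to $u$ the original edges contribute $d_T(p_v,u)=d_T(r,p_v)-d_T(r,u)$, since $u$ is an ancestor of $p_v$. Hence $d_{T'}(v,u)=1+2\,d_T(r,p_v)-d_T(r,u)$, while $d_{T'}(r,u)=d_T(r,u)$, so the displayed inequality is equivalent to $1+2\,d_T(r,p_v) > 2\,d_T(r,u)$. This holds because $u$ is an ancestor of $p_v$, whence $d_T(r,u)\le d_T(r,p_v)$, and the extra unit supplied by the length-$(1+d_T(r,p_v))$ leaf path makes the comparison strict.

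I expect no genuinely hard calculation here; the crux is rather the clean cancellation in the second step, which isolates exactly the quantity the construction was designed to control. The one point demanding care is the verification that $u$ is a real internal node rather than a dummy, since this is what lets me equate $d_{T'}(r,u)$ with $d_T(r,u)$; everything else then follows because the ``$+1$'' together with the doubling of the parent's depth is precisely what is needed to dominate $2\,d_T(r,u)$.
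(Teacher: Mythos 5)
Your proof is correct, and it shares the paper's basic strategy---route both paths through the lowest common ancestor $u$ of $v$ and $w$, then translate $T'$-distances back into $T$-distances---but it is organized along a genuinely cleaner line. You cancel the shared segment $d_{T'}(u,w)$ at the outset, reducing the two-leaf claim to the one-sided inequality $d_{T'}(v,u)>d_{T'}(r,u)$, which involves only $v$'s lengthened leaf edge and never touches $w$'s. The paper instead expands both $d_{T'}(v,w)$ and $d_{T'}(r,w)$ in full, using the identity $d_{T'}(u,v)=d_T(r,v)+d_T(u,v)-1$ at both leaves, and the cancellation you make explicit happens only implicitly inside its chain of algebra, which ends with $d_{T'}(v,w)\ge 2d_T(r,w)>2d_T(r,w)-1=d_{T'}(r,w)$. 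Your route buys a shorter computation and a reusable intermediate fact (every leaf of $T'$ is strictly farther from each of its genuine ancestors than the root is); the price is a small structural check the paper never needs to state, namely that $u$ is a real node of $T$ rather than an inserted dummy, so that $d_{T'}(r,u)=d_T(r,u)$---a point you correctly identify and verify. Both arguments ultimately rest on the same designed-in fact: the replaced leaf edge at $v$, of length $1+d_T(r,{\rm parent}(v))$, by itself strictly exceeds $d_T(r,u)$ for every ancestor $u$ of ${\rm parent}(v)$.
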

\begin{proof}
We have already observed that the tree $T'$ is weight-balanced.
So we have yet to show that $T'$ is autocratic.
First, observe that, by a simple induction argument,
if $u$ is an ancestor in $T$ of a leaf $v$, then in $T'$ we have the
following:
\[
d_{T'}(u,v) = d_T(r,v) + d_T(u,v) - 1.
\]
In particular, we have the following:
\[
d_{T'}(r,v) = 2 d_T(r,v) - 1 .
\]
Let $v$ and $w$ be two leaves in $T'$.
Furthermore, let $u$ be the least common ancestor of $v$ and $w$ in
$T'$.
Then
\begin{eqnarray*}
d_{T'}(v,w) &=& d_{T'}(u,v) + d_{T'}(u,w) \\
            &=& d_T(r,v) + d_T(u,v) - 1 + d_T(r,w) + d_T(u,w) - 1 \\
            &=& d_T(r,u) + d_T(u,v) + d_T(u,v) - 1 + d_T(r,w) + d_T(u,w) - 1 \\
            &=& (d_T(r,u) + d_T(u,w)) + d_T(r,w) + 2 d_T(u,v) - 2 \\
            &=& 2 d_T(r,w) + 2 d_T(u,v) - 2 \\
            &\ge& 2 d_T(r,w) \\
            &>& d_{T'}(r,w).
\end{eqnarray*}
Thus, $T'$ is an autocratic weight-balanced tree.
\end{proof}

Therefore, we have a way of constructing for any ordered set of weighted
items an autocratic weight-balanced tree for that set.
We will use such data structures as auxiliary components in the 
structures we discuss next.

\section{Heavy Path Decompositions}

\begin{figure}[t]
\centering\includegraphics[width=5in]{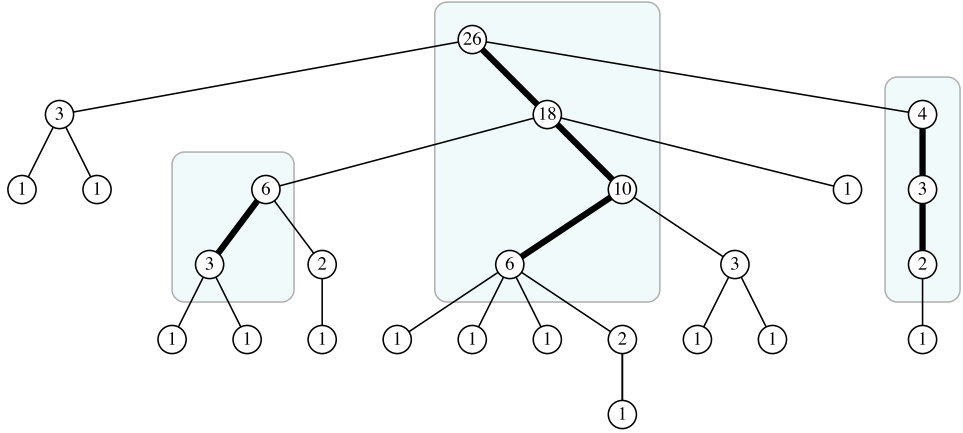}
\caption{The heavy path decomposition of a tree. Three heavy paths are shown; the remaining 17 nodes form degenerate length-0 heavy paths.The numbers shown at each node $v$ are the sizes $n(v)$ of the subtree rooted at that node.}
\label{fig:heavypath}
\end{figure}

Let $T$ be a rooted ordered tree of arbitrary degree and depth having $n$
nodes.
Sleator and Tarjan~\cite{SleTar-JCSS-83} describe a scheme, which
we call the \emph{heavy path decomposition}, 
for decomposing $T$ into a
hierarchical collection of paths (see also~\cite{SchVis-SJC-88} for an alternative path decomposition scheme with similar properties).

Their scheme works as follows.
For each node $v$ in $T$, let $n(v)$ denote the number of descendents in the
subtree rooted at $v$, including $v$ itself.
For each child-to-parent edge, $e=(v,w)$ in $T$, label
$e$ as a \emph{heavy} edge if $n(v) > n(w)/2$.
Otherwise, label $e$ as a \emph{light} edge.
Connected components of heavy edges form paths, called
\emph{heavy paths}, which may in turn have many incident light edges.
As a degenerate case, we also consider the zero-length path consisting of a
single node in $T$ incident only to light edges as a heavy path.

Note that the size of a subtree at least doubles every time we traverse a light
edge from a child to a parent. 
(See Figure~\ref{fig:heavypath}.)
Thus, if we compress every heavy path in $T$ to a
single ``super'' node, preserving the relative order of the nodes, then we
define a tree, $Z$, of depth $O(\log n)$.

Of course, the nodes in $Z$ can have arbitrary degree. Nevertheless, 
for data structuring purposes, following Alstrup et al.~\cite{AlsLauSom-WADS-97}, we may replace each vertex $v$ in $Z$ having
$d$ children $v_1,v_2,\ldots,v_d$ with a weight-balanced binary tree
that uses the $n(v_i)$ values as weights.
The useful property of this substitution is that
any leaf-to-root path $P$ in the resulting binary tree, $Z''$,
will have length $O(\log n)$, since the lengths of the subpaths 
of $P$ in the weight-balanced binary trees traversed in $P$ form 
a telescoping sum that adds up to $O(\log n)$.
That is, it can be written as a value proportional to
something of the form

\vspace*{-20pt}
\begin{eqnarray*}
& & \log n_0 + \log n_1/n_0 + \log n_2/n_1 + \cdots + \log n/n_k \\
& =& \log n_0 + \log n_1-\log n_0 +\log n_2-\log n_1 +\cdots +\log n-\log n_k \\
& =& \log n_0 + \log n,
\end{eqnarray*}

\vspace*{-8pt}\noindent
where $n_0$ is a constant.

In our case, we use autocratic weight-balanced binary trees for the
substitutions of high-degree super nodes in $Z$, so as to define the binary
tree of depth $O(\log n)$.
This construction will prove essential for our greedy embedding scheme.
Before we present this geometric embedding, however, we first present 
a combinatorial greedy embedding in a completely contrived metric space,
which we will subsequently show how to turn into a greedy embedding
in the hyperbolic plane using the standard hyperbolic metric.

\section{Greedy Embeddings in the Dyadic Tree Metric Space}

\begin{figure}[t]
\centering
\includegraphics[width=2.5in]{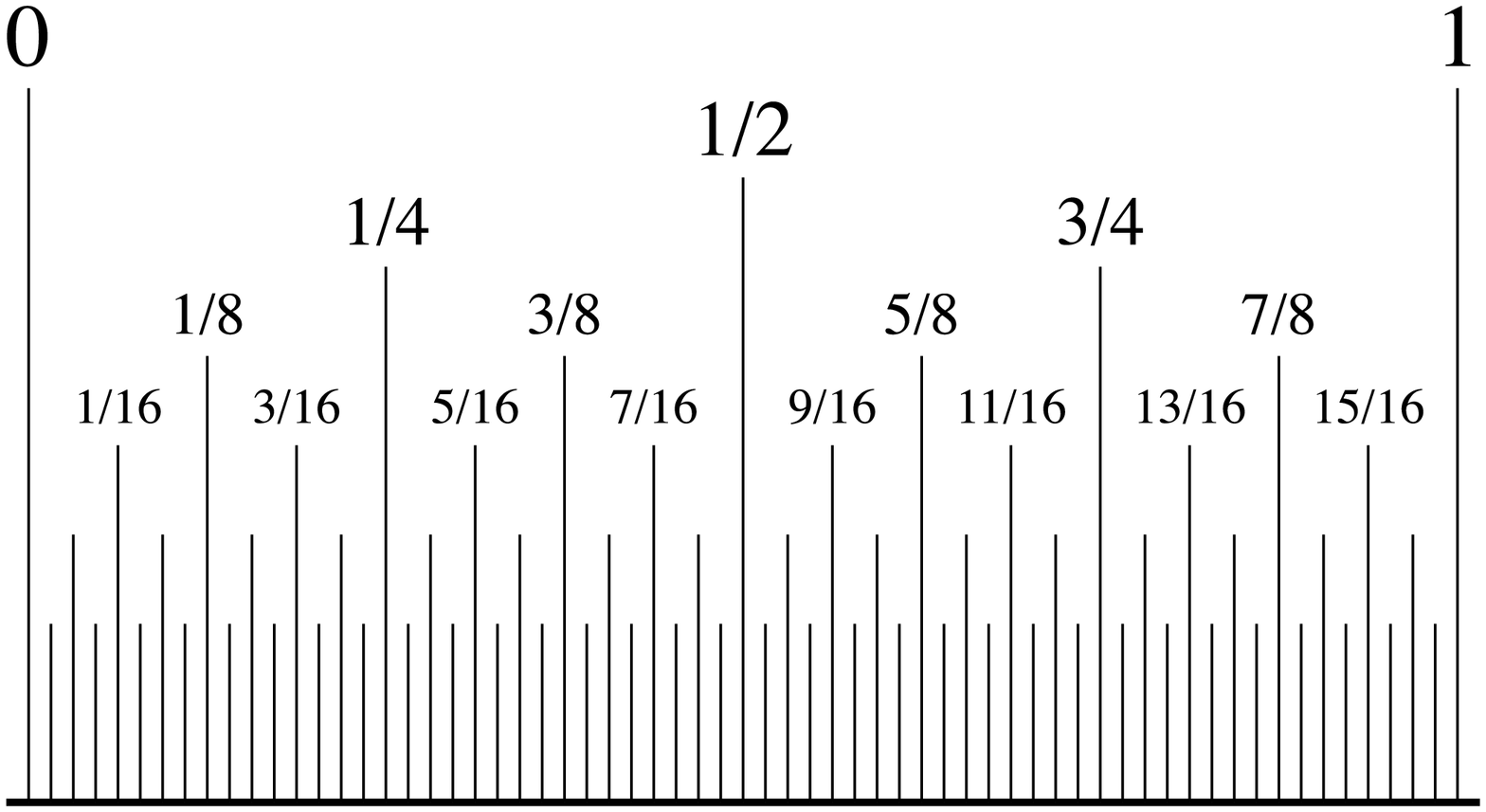}
\qquad\qquad
\includegraphics[width=2.5in]{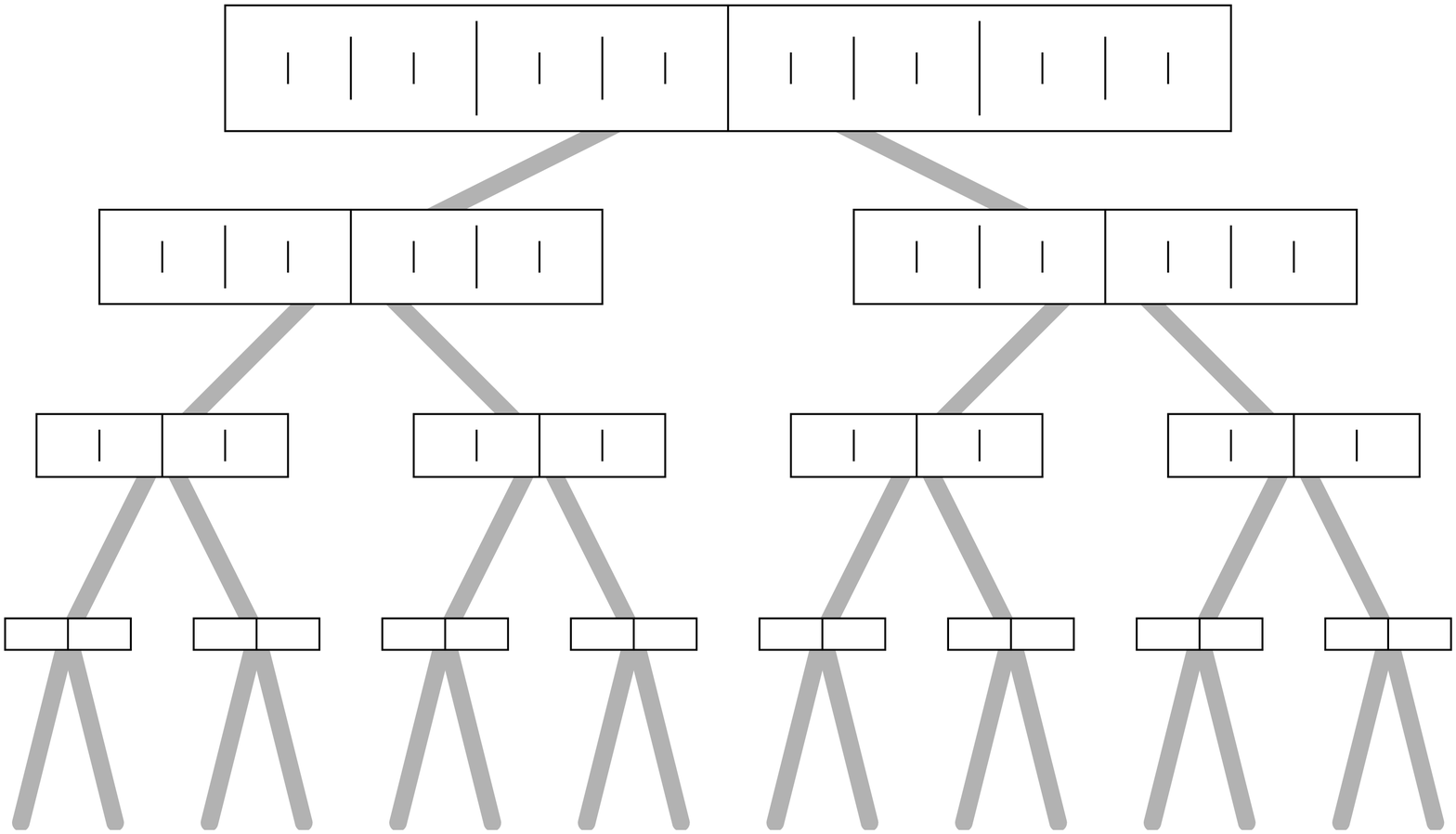}
\caption{The dyadic rational numbers (left) and a schematic view of the dyadic tree metric space (right).}
\label{fig:dyadic}
\end{figure}

Suppose we are given a graph $G$ having $n$ vertices and $m$ edges
for which we wish to 
construct a succinct greedy embedding.
We show in this section how to produce a combinatorial greedy embedding in a
contrived space we call the \emph{dyadic tree metric space}.

We may consider an infinite binary tree to be an abstract metric space, in which the distance between any two tree nodes is just the number of edges on the shortest path between them. But there is another natural metric that can be formed on the same tree by embedding it into the \emph{dyadic rational numbers} (Figure~\ref{fig:dyadic}, left), rational numbers with denominators that are powers of two. Let $f$ be the map from the infinite binary tree to the open interval $(0,1)$ that maps the root of the tree to $1/2$, and that maps the children of a node $x$ at level $i$ of the tree to $f(x)\pm 2^{-i-2}$; thus, the children of the root map to the dyadic rational numbers $1/4$ and $3/4$, the grandchildren of the root map to $1/8$, $3/8$, $5/8$, $7/8$, etc. We define the \emph{dyadic metric} on the infinite binary tree as the metric in which the distance between two tree nodes $x$ and $y$ is $|f(x)-f(y)|$. Note that all distances in the dyadic metric are less than one.

We will show that any graph may be greedily embedded into a hybrid ad-hoc metric space that combines features from both of these two tree metrics; we call it the \emph{dyadic tree metric space}.  A point in this space is represented by a pair $(x,y)$, where $x$ and $y$ are nodes in the infinite binary tree and where $x$ must be an ancestor of $y$ (possibly equal to $y$ itself). We define the distance between two points $(x,y)$ and $(x',y')$ in the dyadic tree metric space to be the sum of the tree distance between $x$ and $x'$ and of the dyadic distance $|f(y)-f(y')|$. The dyadic tree metric space can be represented as an infinite binary tree representing the $x$ coordinates of each of its points, in which each tree node contains an interval of dyadic rational numbers; this interval of numbers is split into two halves at the two children of each node. This representation is depicted in Figure~\ref{fig:dyadic}, right.

\begin{figure}[t]
\centering\includegraphics[width=6in]{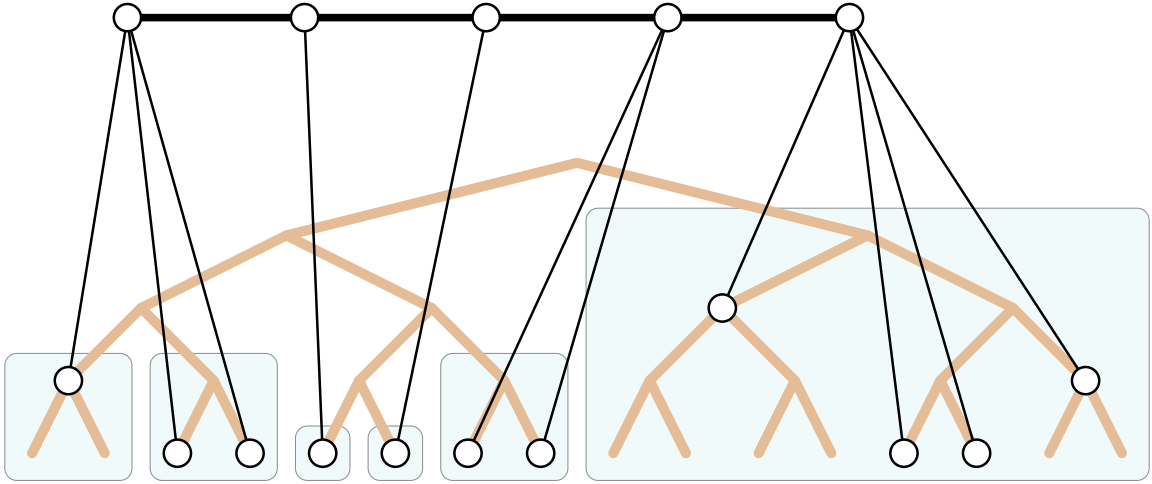}
\caption{Our two-level weight-balanced strategy for placing the children of the nodes on a heavy path. The groups of children for each heavy path node are assigned to subtrees in a weight-balanced way (gray shaded areas), and then within each subtree the individual children are placed using a second level of weight balancing. The third step of child placement, in which we make the subtree between the root (representing the heavy path) and its children autocratic, is not shown.}
\label{fig:double-balance}
\end{figure}

Our embedding begins with us finding a spanning tree $T$
of $G$, choosing a root arbitrarily, and producing a heavy path decomposition of $T$. For technical reasons we require that each node in a nontrivial heavy path of the decomposition have at least one child that is not in the path; we add dummy nodes to $T$ if necessary, after forming the path decomposition, to ensure that this is true. There will be fewer than $n$ dummy nodes added, so they will not significantly increase the number of bits needed to represent each vertex in our greedy embedding.

We orient the light edges for each heavy path $P$ so that they are 
all on the same side of $P$ and we orient the light edges incident upon the
same vertex. 
We then compress each
heavy path into a super node, using the orientation of edges around the
vertices of each heavy path to determine the ordering of children for each
node in the resulting tree, $Z$. If a super node in $Z$ is the right child of its parent, we make the left-to-right ordering of children be the same as the ordering from parent to child in the heavy path; if, on the other hand, it is the left child of its parent, we make the left-to-right ordering of children be the same as the ordering from child to parent in the heavy path.

Next, we form groups of the nodes in $Z$ that have the same parent in $T$.
We form a weight-balanced binary tree for each these groups.
Furthermore,
within each group, we form a weight-balanced binary 
tree of the nodes in the group.
Concatenating these two levels of weight-balanced trees forms a single
weight-balanced tree connecting the node in $Z$ to each of its
children; we apply the transformation described earlier to make this
tree autocratic. The first three steps, in which we form a
weight-balanced tree of the groups and a weight-balanced tree within
each group, and then concatenate these two levels of weight-balanced
trees to form a single binary tree for all children of the node in $Z$,
are depicted in Figure~\ref{fig:double-balance}.

This construction of an autocratic weight-balanced tree for each node
in $Z$ can be used to embed $Z$ as a whole into the infinite binary
tree. The root of $Z$ may be placed at the root of the infinite binary
tree, and the children of each node $v$ in $Z$ are placed under that
node in the positions of the infinite binary tree corresponding to
their positions in the autocratic weight-balanced tree constructed for
$v$.  We observe that, in this way, all nodes of $Z$ are placed at most
$O(\log n)$ levels deep in the infinite binary tree; for, due to the
weight balancing, the distance in the infinite binary tree between any
node $w$ and its parent $v$ is proportional to the difference in the
logarithms of the weights of the subtrees rooted at $v$ and $w$, and
along any path of $Z$ these differences add in a telescoping series to
$O(\log n)$.

We have embedded $Z$ into the infinite binary tree; we are now ready to embed $T$ itself into the dyadic tree metric. To do so, we must determine a pair $(x,y)$ of coordinates for any node $v$ of $T$; both $x$ and $y$ must be nodes of the infinite binary tree, and $x$ must be an ancestor of $y$. The $x$ coordinate of $v$ is simply the node of the infinite binary tree at which the heavy path of $v$ is placed. The $y$ coordinate of $v$ is the least common ancestor in the infinite binary tree of the placements of all the children of $v$. This calculation is the reason we required $v$ to have at least one child; for leaf nodes of $T$, we instead set $y=x$. Due to our two-level weight balancing strategy, two nodes of $T$ that belong to the same heavy path (and that therefore share the same $x$ coordinate) will have different $y$ coordinates, for their children will be placed within disjoint subtrees of the infinite binary tree.

\begin{lemma}
The embedding of $T$ into the dyadic tree metric space described above is greedy.
\end{lemma}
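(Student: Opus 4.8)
The plan is to show that for any two vertices $v$ and $w$ of $T$, the unique tree path from $v$ to $w$ is already distance decreasing in the dyadic tree metric; since every edge of this path is an edge of $G$, this establishes greediness. I would write the path as the ascent from $v$ to $a=\mathrm{LCA}_T(v,w)$ followed by the descent from $a$ to $w$, and let $(x(\cdot),y(\cdot))$ denote the coordinates assigned by the embedding, so that the distance to the target decomposes as $D(u)=d(x(u),x(w))+|f(y(u))-f(y(w))|$, where $d$ is the integer tree distance in the infinite binary tree and the second term is a dyadic distance, which is always strictly less than $1$. I would then classify each edge of the path as either a heavy-path edge, along which $x$ is constant, or a light edge, along which $x$ changes.

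For the light edges I would argue purely with the first coordinate. When the current vertex $u$ precedes $a$ on the path, the target placement $x(w)$ does not lie in the subtree of $x(u)$, so moving to the parent super node, an ancestor of $x(u)$ in the infinite binary tree, strictly decreases $d(x(\cdot),x(w))$; symmetrically, once we are descending toward $w$, the next light child's placement is $x(w)$ or an ancestor of it, so stepping to it again strictly decreases $d(x(\cdot),x(w))$. In both cases the decrease is at least $1$, since each child super node is placed strictly below its parent super node in the embedding of $Z$ built from the autocratic weight-balanced trees. Because the dyadic term can change by strictly less than $1$, any light-edge step strictly decreases $D$.

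The heavy-path edges are where the dyadic coordinate must do the work, since there $x$ is fixed and $D$ decreases exactly when $|f(y(\cdot))-f(y(w))|$ does. Here I would use two facts about the two-level weight-balanced placement of the children of a heavy path $p_1,\dots,p_\ell$: first, the groups of light children occupy subtrees of the infinite binary tree in the same left-to-right order as their attachment nodes along the path, so the values $f(y(p_1)),\dots,f(y(p_\ell))$ form a strictly monotone sequence; and second, the route leaves the heavy path toward $w$ at a single node $p_i$, namely the attachment point of $w$'s branch, or the top of the path when the route continues upward. When the departure is downward, $f(y(w))$ lies inside the dyadic interval associated with $p_i$, and monotonicity makes $|f(y(p_k))-f(y(w))|$ shrink as $k$ moves toward $i$. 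This handles the descent, the same-heavy-path case, and the traversal of $a$'s own heavy path.

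The delicate point, and the step I expect to be the main obstacle, is the upward departure, where the route exits a heavy path at its top $p_1$ while $f(y(w))$ lies entirely outside that path's dyadic interval. I must show that $p_1$ sits at the end of the interval facing the parent, so that $f(y(w))$ lies beyond $f(y(p_1))$ and ascending the path still decreases the dyadic distance. This is precisely what the orientation rule is designed to guarantee: a super node that is a right (respectively left) child has its children ordered top-to-bottom (respectively bottom-to-top), placing the top of the heavy path on the side of its interval adjacent to the parent's region, where every ancestor-and-beyond target resides. Verifying this alignment carefully, that for every target reached by ascending $f(y(w))$ really does fall beyond the top group's center on the correct side, is the crux of the argument; once it is in place, combining the monotonicity estimate for heavy-path edges with the integer-versus-fraction estimate for light edges shows that $D$ strictly decreases at every step, so the tree path is distance decreasing and the embedding is greedy.
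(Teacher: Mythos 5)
Your overall plan---splitting the tree path into heavy-path steps (where only the dyadic coordinate moves) and light-edge steps (where the $x$-coordinate moves), and winning the light-edge steps by an integer drop in the $x$-part against a sub-unit change in the dyadic part---is the same decomposition the paper's proof uses. But your light-edge argument has a genuine gap, and it is exactly the point the paper's machinery exists to handle. You claim that because $x(w)$ lies outside the subtree of $x(u)$, stepping to the parent super node strictly decreases the binary-tree distance, ``since each child super node is placed strictly below its parent super node.'' That inference is invalid. The parent super node is not the binary-tree parent of $x(u)$; it sits several binary-tree edges above $x(u)$, at the root of the weight-balanced tree in which $x(u)$ is a leaf. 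If the destination $w$ descends from a \emph{sibling} heavy path $x(u')$ in that same tree (for instance, another light child of the same heavy-path node), then the branch point $L=\mathrm{LCA}(x(u),x(w))$ is an internal node strictly below the root, and the change in the $x$-contribution is $d(\mathrm{root},L)-d(x(u),L)$: writing $d(x(u),x(w))=d(x(u),L)+d(L,x(w))$ and $d(\mathrm{root},x(w))=d(\mathrm{root},L)+d(L,x(w))$, nothing about ``child below parent'' makes this negative. In a plain weight-balanced tree it can be zero or even positive (take $L$ deep and $x(u)$ a leaf hanging just below $L$), so the step would not be greedy. What rescues it is precisely the autocratic property established in the paper's first lemma, $d_{T'}(v,w)>d_{T'}(r,w)$ for any two leaves: it forces $d(x(u),L)\ge d(\mathrm{root},L)+1$ for every such branch point, hence an $x$-drop of at least $1$, which dominates the dyadic change of less than $1$. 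Your proof never actually uses this property---your justification would apply verbatim to non-autocratic weight-balanced trees, for which the claim is false---whereas the paper's proof cites the autocratic property as the reason the light-edge case works; the dummy-node padding of Section~2 plays no role in your argument, which is a sign the argument cannot be right as stated.

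A second problem is that you explicitly defer what you yourself call ``the crux'': the heavy-path steps taken when the route exits through the top of the path. You describe the alignment property that the left-child/right-child orientation rule is supposed to provide, but you do not verify it, so the proposal is incomplete in exactly its hardest case. The paper's proof disposes of this case by asserting that the two-level weight-balanced placement was chosen consistent with the linear order induced by $f$, so that a step toward the topmost node of the path is a step toward every terminus reached through the parent edge; if you complete your write-up, this is the statement you must actually prove, and it is where the orientation rule gets consumed.
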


\begin{proof}
Any directed path in $T$ consists of edges that, when translated into the dyadic tree metric space, have three types: edges from a node to the parent heavy path in $Z$, edges within a heavy path, and edges from a node to a child heavy path in $Z$. We must show that edges of each type lead to a node that is closer to the terminus of the path.

For the edges that go from a node to the parent heavy path or to a child heavy path, this is straightforward: the contribution of the $x$-coordinates to the distance to the terminus decreases by one at each step, due to the autocratic property of our weight-balanced trees, more than offsetting any possible increase in the contribution of the $y$-coordinates.

For the edges that remain within a heavy path, the $x$ coordinates remain unchanged and do not lead to any increase or decrease of the distance to the terminus. The $y$ coordinates are linearly ordered by the map $f$ from infinite binary tree nodes to dyadic rationals, and our weight-balanced trees were chosen to be consistent with this linear ordering; therefore, any step along the heavy path, either towards a node of the path that is the ancestor of the terminus or towards the topmost node of the path and the edge leading to the parent node in $Z$, decreases the distance to the terminus.
\end{proof}

As in previous work~\cite{k-gruhs-07}, a greedy embedding for the spanning tree $T$ is automatically greedy for the overall graph $G$ from which it was drawn.

\section{Succinct Greedy Embedding in the Hyperbolic Plane}

We have shown that any tree $T$ (and any graph $G$ by choosing a spanning tree of $G$) may be greedily and succinctly embedded into a dyadic tree metric space. To complete our greedy embedding, it remains to show that this space may be embedded, independently of our original graph (but depending on a parameter $D$ determined by the number of vertices of the graph), into the hyperbolic plane in such a way that the greedy property of the embedding of $T$ is preserved.  That is, although the distances themselves in the hyperbolic plane may differ from those in the dyadic tree metric space, composing our embedding of $T$ into the dyadic tree metric space with our embedding of the dyadic tree metric space should yield a greedy embedding of $T$ into the hyperbolic plane.

Due to the existence of this embedding, we may reinterpret the succinct coordinates computed for the embedding of a graph into the dyadic tree metric space as also being coordinates for a subset of points in the hyperbolic plane. Not every hyperbolic plane point will be representable with such coordinates, but this is no different in principle from using pairs of integers to represent grid points in the Euclidean plane: not every Euclidean point is representable as an integer grid point. The parameter $D$ is analogous to the scale of a grid embedding.

\begin{figure}[t]
\centering\includegraphics[width=4in]{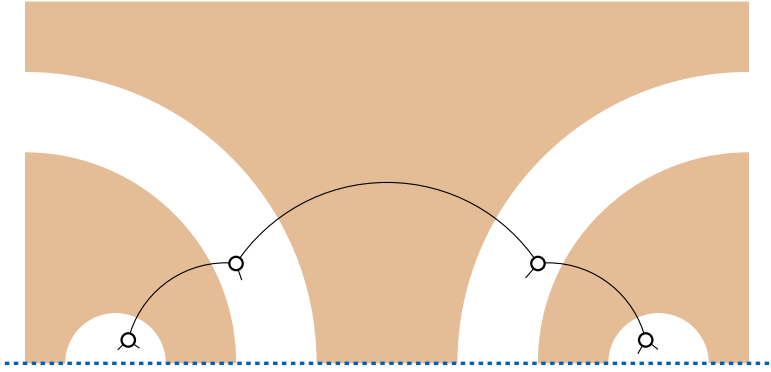}
\caption{Disjoint buffer zones of width $D$ are crossed by each edge of an embedding of the infinite binary tree into the hyperbolic plane, so that the tree distance and hyperbolic distance closely approximate each other.}
\label{fig:buffers}
\end{figure}

Our overall strategy will be to embed the infinite binary tree into the hyperbolic plane in such a way that any edge has length $D+O(1)$ and crosses a \emph{buffer zone} of width $D$, bounded by two hyperbolic lines (Figure~\ref{fig:buffers}). The buffer zones for different edges will be disjoint from each other. Thus, any two nodes of the tree that have tree distance $k$ units apart will have hyperbolic distance at least $Dk$ (because any path between the two nodes must cross $k$ buffer zones) and at most $(D+O(1))k$ (there exists a path following tree edges with that length). In our application, all tree paths will have $O(\log n)$ edges; thus, by choosing $D=\Omega(\log n)$ we may guarantee that the order relation between any two distinct tree distances remains unchanged by this hyperbolic embedding. Any point $(x,y)$ of the dyadic tree metric will be placed near the embedding of tree node $x$, and this placement will ensure the greediness of any edge whose endpoints belong to different paths of our heavy path decomposition.

\begin{figure}[t]
\centering\includegraphics[width=4in]{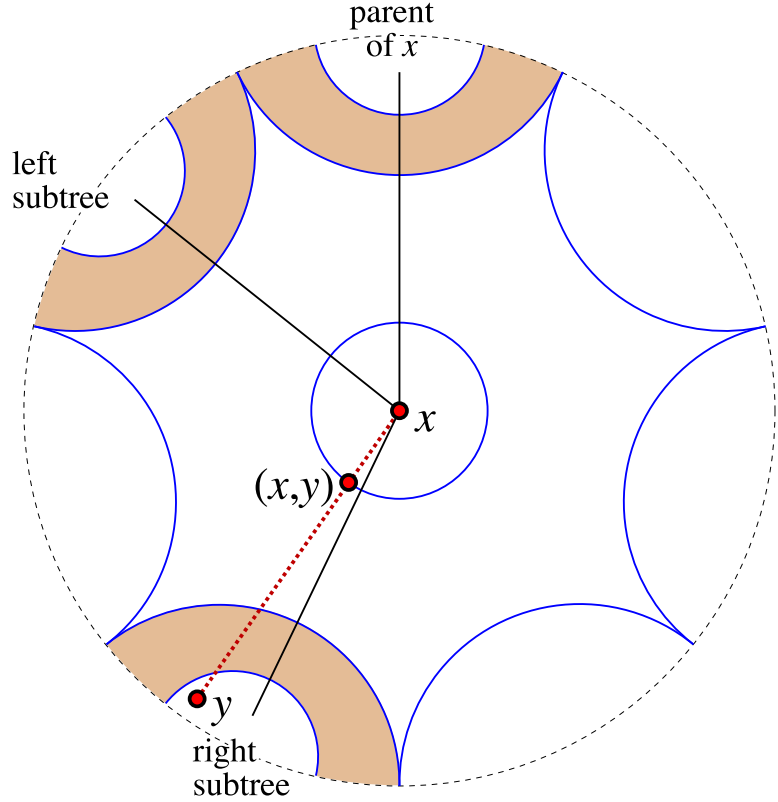}
\caption{Top-down placement of node $x$ of the infinite binary tree and point $(x,y)$ of the dyadic tree metric space into the hyperbolic plane, shown in a Poincar\'e disk model centered at $x$.}
\label{fig:heptagon}
\end{figure}

Next, we place nodes of the infinite binary tree into the hyperbolic plane, with the buffer zones described above. Although this placement is conceptual rather than algorithmic, we may view it as being performed in a top down traversal of the tree, so that when node $x$ is placed we will already know the location of its parent, the buffer zone separating $x$ from its parent, and a line connecting it to its parent and on which it must be placed. We place $x$ itself on this line in such a way that the boundary of the parental buffer zone forms one of the seven sides of an ideal regular heptagon---a figure in the hyperbolic plane formed by seven lines that are asymptotic to each other but never intersect, such that the angle subtended by each line as viewed from $x$ is equal. Figure~\ref{fig:heptagon} shows this placement, in a Poincar\'e disk model of the hyperbolic plane centered at $x$; the parental buffer zone is the topmost shaded region in the figure and the vertical line through $x$ is the one connecting it to its parent node. The large arcs depict hyperbolic lines forming the heptagon described above.

In the case where $x$ is the right child of its parent, so that the upper nodes of the heavy path represented by $x$ have children in its left subtree and the lower nodes of the heavy path have children in the right subtree, shown in the figure, we place the left subtree within the halfplane bounded by the heptagon side one step counterclockwise from the parent, and the right subtree within the halfplane bounded by the heptagon side three steps counterclockwise from the parent, as shown in the figure. In the case where $x$ is its parent's left child, we reverse the figure, placing the right subtree within the halfplane one step clockwise from the parent and the left subtree within the halfplane three steps clockwise from the parent. In either case, we draw lines connecting $x$ to its child nodes, at angles of $2\pi/7$ and $6\pi/7$ from the angle of the line connecting $x$ to its parent (the solid straight lines of the figure). We use the heptagon edges as the outer boundaries of buffer zones between $x$ and its children, and we set the inner boundaries of the buffer zones to be hyperbolic lines perpendicular to the lines connecting $x$ to its children, at distance $D$ from the outer boundaries of the buffer zones. With this information determined, we may continue to place the children of $x$ in the same way.

We are finally ready to describe the mapping of the dyadic tree metric space into the hyperbolic plane. Recall that each point of the dyadic tree metric space consists of a pair $(x,y)$ where $x$ and $y$ are nodes of the infinite binary tree, $x$ a parent of $y$. We draw small circles of equal radius centered at each point where we have placed a node of the infinite binary tree---the precise radius is unimportant as long as it is small enough that the circles are disjoint from the buffer zones. Then, given a point $(x,y)$ of the dyadic tree metric space, we draw a hyperbolic line segment from $x$ to $y$ (the dotted straight line in the figure), and place $(x,y)$ at the point where this line segment intersects the circle centered at $x$. In the case $x=y$, which happens in our construction only for leaves, we instead place $(x,x)$ at the point where the line segment from $x$ to its parent intersects the circle centered at $x$.

\begin{figure}[t]
\centering\includegraphics[width=3.5in]{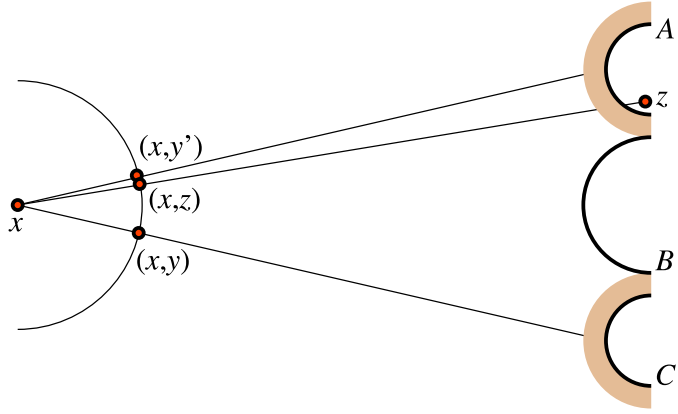}
\caption{Illustration for proof of greediness of our embedding (not to scale).}
\label{fig:heavy-is-greedy}
\end{figure}

\begin{theorem}
For sufficiently large values of $D$, the embedding of $G$ formed by composing the embedding from $G$ into the dyadic tree metric space and the embedding of the dyadic tree metric space into the hyperbolic plane is greedy.
\end{theorem}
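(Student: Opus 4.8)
The plan is to reduce the theorem to a single geometric comparison and then to dispatch the two genuinely different kinds of steps separately. Two facts may be used directly: the embedding of the spanning tree $T$ into the dyadic tree metric space is greedy by the preceding lemma, and a greedy embedding of $T$ is automatically greedy for $G$. Thus it suffices to show that the second map, from the dyadic tree metric space into the hyperbolic plane, carries each distance-decreasing tree step to a distance-decreasing step in the hyperbolic metric. I would fix a target vertex $w$, write $p_w$ for its hyperbolic image and $x_w$ for the node realizing its $x$-coordinate, and first record the estimate implied by the buffer-zone construction. Every point $(x,y)$ is placed on a circle of some fixed radius $\rho$ about the node $x$, and since any path between tree nodes crosses one buffer zone of width $D$ per tree edge, $d_H(x,x_w)$ lies in $[Dk,(D+c)k]$ with $k=d_{\mathrm{tree}}(x,x_w)$ and $c=O(1)$. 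By the triangle inequality this yields $d_H(p,p_w)=d_H(x,x_w)+O(1)$, where the additive $O(1)$ (at most $2\rho$) is independent of $D$.

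Next I would split the tree edges into the three types of the preceding lemma and treat the two inter-path types together. For an edge from a node to its parent or child heavy path the autocratic property forces the integer distance $k$ to $w$ to drop by at least one. Consecutive values of $k$ give hyperbolic-distance ranges $[Dk,(D+c)k]$ separated by a gap of at least $D-ck$; because every tree path here has length $O(\log n)$, taking $D=\Omega(\log n)$ with a large enough constant keeps this gap both positive and larger than the fixed $O(1)$ corrections from the $y$-coordinate and the two circle placements. Hence each such step is strictly distance-decreasing. This is the easy case, and it is exactly the purpose of $D$: it inflates the tree-distance scale until, over the $O(\log n)$-length paths that occur, it dominates all dyadic and placement fluctuations.

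The main obstacle is the intra-path case, in which both endpoints share the same $x$-coordinate node $x$ and differ only in $y$, so the tree-distance term cancels and the sign of $d_H(p_{v_i},p_w)-d_H(p_{v_{i+1}},p_w)$ is a purely directional, second-order effect: both images sit on the small circle about $x$ while $p_w$ lies far away toward the child or parent subtree containing $w$. My plan is to let the target recede: as $D$ grows the direction from $x$ to $p_w$ is pinned to an ideal point $\xi$ inside the fixed angular sector that the heptagon assigns to $w$'s subtree, and the distance difference converges to a difference of Busemann functions depending only on $\xi$ and on the two points of the circle. The heptagon places the two child subtrees and the parent into three well-separated sectors, and the parent/child orientation convention for the heavy path was chosen precisely so that the left-to-right order of the points $(x,y)$ along the circle---the order of $f(y)$---agrees with the order of the heavy-path nodes relative to any one of these sectors. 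The crux is then to verify that, over the sub-arc of the circle occupied by the placement points, the Busemann function of every admissible $\xi$ is \emph{strictly monotone}, so that one step along the heavy path toward the node whose subtree contains $w$ strictly decreases it. Because the placement points can lie only $1/\mathrm{poly}(n)$ apart while the convergence of finite distances to the Busemann limit is exponentially fast in the (at least $D$) distance to the target, a uniform positive monotonicity gap together with $D=\Omega(\log n)$ makes the finite-distance difference inherit the correct strict sign. Establishing this uniform directional monotonicity---reconciling the dyadic order $f$, the weight-balanced child placement, and the heptagon angles---is the delicate step and the heart of the argument; combined with the inter-path cases and the choice of $D$ it gives greediness of the composed embedding.
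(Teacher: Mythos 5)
Your reduction and your treatment of the inter-path case ($x\ne x'$) match the paper: autocratic property forces the tree distance to $w$'s heavy path to drop by one, buffer zones give hyperbolic distance bounds $[Dk,(D+O(1))k]$, and $D=\Omega(\log n)$ with a large constant makes the step strictly decreasing. But your intra-path analysis has two genuine gaps. First, you only consider destinations whose $x$-coordinate differs from $x$, asserting that ``$p_w$ lies far away toward the child or parent subtree containing $w$.'' When $w$ lies on the \emph{same} heavy path, $p_w$ sits on the same small circle around $x$ at distance $O(1)$, and no choice of $D$ makes it recede, so your limiting argument does not apply to this sub-case at all. (The paper handles it exactly, with no limits: the heavy path occupies an arc of less than half the circle, and distance between circle points is monotone in arc length.) Second, and more importantly, the statement you identify as ``the crux''---strict monotonicity of the Busemann function of every admissible $\xi$ over the sub-arc of placement points---is both left unproven and, as stated, false in precisely the case that matters. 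When the step lands on the node $y'=y''$ from whose subtree the destination $z$ descends, the closest circle point to $z$, call it $(x,z)$, can lie strictly \emph{between} $(x,y)$ and $(x,y')$; then $d(\cdot,z)$ (equivalently any Busemann function toward $z$'s direction) is not monotone over the arc of placement points, because the step overshoots the minimum. Greediness survives only because the overshoot is small, and establishing that is where the paper does its real geometric work: with $\hat y$ the relevant least common ancestor, the heptagon side $B$ separating the two inner buffer boundaries $A$ and $C$ is closer to $x$ by $D-O(1)$, hence subtends an angle exponentially larger than those of $A$ and $C$; since $z$ and $(x,y')$ both lie behind $A$ while $(x,y)$ lies behind $C$, the arc from $(x,y')$ to $(x,z)$ is shorter than the arc from $(x,y)$ to $(x,z)$, which must cross $B$'s angular extent. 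Your proposal never confronts this overshoot configuration, so the heart of the theorem is missing.

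There is also a quantitative flaw in the machinery you propose. Angular separations between placement points are not $1/\mathrm{poly}(n)$: sectors at tree depth $j$ below $x$ subtend angles of order $e^{-\Theta(Dj)}$, so with $D=\Theta(\log n)$ and depth $\Theta(\log n)$ the separations can be $e^{-\Theta(\log^2 n)}$, super-polynomially small. Near the closest point the Busemann differences you would need to sign are quadratic in these angles, i.e.\ of the same order as (or smaller than) the error incurred in replacing the finite distance by its Busemann limit, so ``exponentially fast convergence beats the gap'' does not close as claimed. The limiting apparatus is in any case unnecessary: for a fixed finite target $z$, the hyperbolic law of cosines already makes $d(\cdot,z)$ an \emph{exactly} monotone function of arc length from the closest circle point, which is the clean comparison the paper uses; the entire difficulty lives in the angular bookkeeping of the overshoot case, not in any approximation of distances by horofunctions.
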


\begin{proof}
We show that, for every edge $e$ of the chosen spanning tree, and every possible terminus $v$ of a path using $e$, that traveling along $e$ reduces the distance to the terminus. We assume that the starting endpoint of $e$ is placed at point $(x,y)$ of the dyadic tree metric, the ending endpoint is placed at point $(x',y')$, and that these points are mapped as described above to the hyperbolic plane. We distinguish several cases.

First, if $x\ne x'$, let $k=O(\log n)$ be the tree distance from $x'$ to the destination. Then, due to the autocratic property of our weight-balanced placement of heavy paths into the dyadic tree metric, $x$ is at tree distance at least $k+1$ from the destination. As discussed above, due to the buffer zones of our construction, $(x,y)$ is at hyperbolic distance at least $(k+1)D$ from the destination, while $(x',y')$ is at hyperbolic distance at most $k(D+O(1))$. By choosing $D$ sufficiently large (a constant times $\log n$), we can guarantee that the former distance is larger than the latter and that this step is greedy.

Second, if $x=x'$ and the eventual destination also has the same value of $x$, the result follows from the fact that our embedding places the nodes of any heavy path consecutively over an arc of less than half of a circle. Such an embedding is greedy for any path, no matter how the nodes are distributed within the arc.

Third, if $x=x'$ and the eventual destination is reached via the parent of $x$, the step is greedy for the same reason as in the second case: the nodes that are mapped to $x$ form a heavy path placed in order along an arc of less than half the circle, with the node of the arc closest to the parent being the apex of the heavy path.

The most complicated case is the fourth: $x=x$ and the eventual destination $z$ has $x''$ as a proper descendant of $x$. The closest point to $z$ on the circle surrounding $x$ onto which $(x,y)$ and $(x',y')$ are both mapped is the hyperbolic point represented by the coordinates $(x,z)$; the distance to $z$ from other points on the circle can be calculated as a monotonic function of the arc length between those other points and $(x,z)$. Thus, moving around the circle towards $(x,z)$ is a greedy step. Unfortunately, the point $(x,z)$ may not be a node of the heavy path; rather, the node of the heavy path from which $z$ descends may be some other nearby point $(x,y'')$. We must show that any step along the heavy path towards this point is greedy.

In most cases, it is straightforward to show that this step is greedy: a step around the circle towards $(x,z)$ is also a step towards $(x,y'')$, which as we have argued immediately above is greedy. The only possible exception occurs when $y'=y''$ and when the true closest point on the circle to $z$, that is, $(x,z)$, lies on the arc of the circle between $y$ and $y'$. In this case we must show that $(x,y')$ and $(x,z)$ are closer in arc length than $(x,y)$ and $(x,z)$, for then the greediness of the step will follow from the monotonicity of the distance to $z$ as a function of arc length.

Let $\hat y$ be the least common ancestor in the binary tree of the two disjoint subtrees containing $y$ and $y'$. Let $A$ be the inner boundary of the buffer zone adjacent to $\hat y$ that contains $y'$, let $C$ be the inner boundary of the buffer zone adjacent to $\hat y$ that contains $y$, and let $B$ be the edge of the regular ideal heptagon adjacent to $\hat y$ that separates $A$ from $C$. Figure~\ref{fig:heavy-is-greedy} illustrates this notation.
These three hyperbolic lines may not be symmetrically placed relative to $x$, due to the asymmetry of the placement of the two subtrees relative to the parent at each node $x$. However, the distances from $x$ to $A$ and to $C$ are within $O(1)$ of each other, and $B$ is closer to $x$ by a distance of $D-O(1)$. It is a basic property of hyperbolic geometry that the angle that an object subtends, as viewed from a fixed point of view $x$, is inversely proportional to an exponential function of the distance of the object from $x$. Thus, $B$ will subtend an angle, as viewed from $x$, that is larger than the angles subtended by $A$ and $C$ by a factor exponential in $D-O(1)$. In particular, for sufficiently large $D$ (larger than some fixed constant, a weaker requirement than the one above that $D=\Omega(\log n)$), both $A$ and $C$ will subtend smaller angles than the angle subtended by $B$. Then, any point behind line $A$, and in particular the point $z$, will form an arc from $(x,y')$ to $(x,z)$ that is shorter than the arc from $(x,y)$ to $(x,z)$. The greediness of the step from $(x,y)$ to $(x,y')$ follows from the monotonicity of the distance to $z$ as a function of arc length.
\end{proof}

\raggedright
\bibliographystyle{abbrv}
\bibliography{geom,hyper}

\end{document}